\let\@afterindentfalse\@afterindenttrue
\newtheorem{theorem}{Theorem}
\newtheorem{lemma}{Lemma}
\newcommand{\di}{\displaystyle}
\renewcommand{\P}{\mathbb{P}}
\newcommand{\R}{\mathbb{R}}
\newcommand{\C}{\mathbb{C}}
\newcommand{\ci}{\mathop{\mathrm{i}}\nolimits}
\newcommand{\Rea}{\mathrm{Re}}
\newcommand{\Ima}{\mathrm{Im}}
\newcommand{\Q}{\mathcal{Q}}
\newcommand{\la}{\lambda}
\newcommand{\dint}{\,\mathrm{d}}
\newcommand{\M}[1]{\mathcal{M}_{#1}}
\newcommand{\Tr}{\mathop{\mathrm{Tr}}\nolimits}
\newcommand{\dg}{\textrm{diag}}
\newcommand{\cl}{\textrm{cl}}
\newcommand{\btop}[2]{\genfrac{}{}{0pt}{2}{#1}{#2}}
\newcommand*{\gz}[1]{\left( #1 \right)}
\newcommand*{\kz}[1]{\left\{ #1 \right\}}
\newcommand*{\sz}[1]{\left[ #1 \right]}
\newcommand*{\scprod}[1]{\left\langle #1 \right\rangle}
\newcommand*{\abs}[1]{\left\vert #1 \right\vert}
\newcommand*{\norm}[1]{\left\Vert #1 \right\Vert}
\title{
Volume of the space of qubit-qubit channels and state transformations under
  random quantum channels\thanks{
quantum channel, volume;
MSC2010: 81P16, 81P45, 94A17
}}
\author{Attila Lovas\thanks{lovas@math.bme.hu}, 
        Attila Andai\\
	Department of Analysis,\\
	Budapest University of Technology and Economics,\\
	1111 Budapest, Egry J\'ozsef street 1., Building H, Hungary
}
\date{\today}
\begin{document}

\maketitle 

\begin{abstract}
The simplest building blocks for quantum computations are 
  the qubit-qubit quantum channels. 
In this paper, we analyze the structure of these channels via their Choi representation. 
The restriction of a quantum channel to the space of classical states 
  (i.e. probability distributions) is called the underlying classical channel. 
The structure of quantum channels over a fixed classical channel is studied, 
  the volume of general and unital qubit channels with respect to the Lebesgue measure 
  is computed and explicit formulas are presented for the distribution of the 
  volume of quantum channels over given classical channels.
We study the state transformation under uniformly random quantum channels.
If one applies a uniformly random quantum channel (general or unital) to a given
  qubit state, the distribution of the resulted quantum states is presented.
\end{abstract}

\section{Introduction}

In quantum information theory, a qubit is the non-commutative analogue 
  of the classical bit.
A qubit can be represented by a $2\times 2$ complex self-adjoint positive semidefinite 
  matrix with trace one \cite{NielsenChuang,PetzQinf,RuskaiSzarekWerner}.
The space of qubits is denoted by $\M{2}$ and it can be identified with the unit ball 
  in $\R^3$ via the Stokes parameterization.
A linear map $Q:\M{2}\to \M{2}$ is called a qubit channel (or qubit quantum operation) 
  if it is completely positive and trace preserving (CPT) \cite{PetzQinf}.
A qubit channel is said to be \emph{unital} (or equivalently identity preserving)
  if it leaves the maximally mixed state invariant.

Choi and Jamio{\l k}owski has published a tractable representation for completely positive (CP) linear 
  maps \cite{Choi,Jamiolkowski}.
To a superoperator $Q:\C^{2\times 2}\to \C^{2\times 2}$ a block matrix
\begin{equation}
\label{eq:blk}
\begin{pmatrix} Q_{11} & Q_{12} \\  Q_{21} & Q_{22} \end{pmatrix}
  \quad Q_{11},Q_{12},Q_{21},Q_{22}\in\C^{2\times 2}
\end{equation}
  is associated, which is called Choi matrix, such that the action of $Q$ is given by
\begin{equation*}
\begin{pmatrix}  a & b\\  c & d \end{pmatrix}
  \mapsto aQ_{11} + bQ_{12} + cQ_{21} + dQ_{22}.
\end{equation*}
Due to Choi's theorem, the linear map $Q:\C^{2\times 2}\to\C^{2\times 2}$
  is CP if and only if its Choi matrix is positive definite \cite{Choi}. 
Hereafter, we will use the same symbol for the qubit channel and its Choi matrix. 
Clearly, a block matrix $Q$ of the form \eqref{eq:blk} corresponds to
  a qubit channel if and only if $Q_{11},Q_{22}\in\M{2}$, 
  $Q_{21} = Q_{12}^\ast$, $\Tr Q_{12}=0$ and $Q\ge 0$, thus
  the space of qubit channels can be identified with a
  convex subset of $\R^{12}$ which is denoted by $\Q$.
If we consider the set of unital qubit channels, 
  identity preserving property requires that 
  $Q_{11}+Q_{22}=I$ must hold in the Choi representation \eqref{eq:blk},
  hence the space of unital qubit channels ($\Q^1$)
  can be identified with a convex submanifold of $\R^9$.

Investigation of the set $\Q$ of all qubit channels play the 
  key role in the field of quantum information processing
  \cite{NielsenChuang}, since any physical transformation of a qubit
  carrying quantum information has to be described by an element of this set. 
Although the classical analogues of $\Q$ and $\Q^1$ are trivial objects, the geometric 
  properties of qubit channels are widely studied \cite{Omkar,AronPasieka}. 
However, the volume of the sets $\Q$ and $\Q^1$ is still unknown.
Random quantum operations and especially random qubit channels are subject of a 
  considerable scientific interest \cite{BoudaVarga}.
For example, an effect of external noise acting on qubits can be 
  modeled by random qubit channels. 
Authors in \cite{KZycz} have studied the spectral properties of quantum channels 
  and designed algorithms to generate random quantum maps.
We should mention that transformations of the maximally mixed state have important 
  applications in superdense coding \cite{HarrowA} which provide motivation 
  for research on the distance of the maximally mixed state and its image under
  the action of a random qubit channel.

Quantum channels are non-commutative analogues of classical stochastic maps, 
  therefore it is natural to consider their actions on classical quantum states 
  (i.e diagonal density matrices).
For a qubit channel $Q$, the \emph{underlying classical channel}
  is defined as the restriction of $Q$ to the space of classical bits.
For example, the following Markov chain transition matrix 
  represents the underlying classical channel of $Q\in\Q$ given by \eqref{eq:blk}
\begin{equation*}
P=\begin{pmatrix} \dg (Q_{11}) \\  \dg (Q_{22}) \end{pmatrix},
\end{equation*}
  where $\dg (Q_{ii})$ is a row vector that contains the diagonal of $Q_{ii}$. 

The main aim of this paper is to compute the volume of general and unital qubit channels 
  and investigate the distribution of the resulted quantum states if a general or unital
  uniformly random quantum channel was applied to a given state.
To compute the volume, we follow a similar strategy to those that was 
  introduced by Andai in \cite{AndaiVol} to calculate the volume of density matrices. 
This approach makes possible to gain information about the distribution
  of volume over classical states and to compute the effect of uniformly random
  quantum channels on the given state.

The paper is organized as follows.
In the second Section, we fix the notations for further computations and 
  we mention some elementary lemmas which will be used in the sequel.
In Section 3, the volume of general and unital qubit channels with respect 
  to the Lebesgue measure is computed and explicit formulas are presented 
  for the distribution of volume over classical channels.
Section 4 deals with state transformations under uniformly random quantum channels.

\section{Basic lemmas and notations}

The following lemmas will be our main tools, we will use them frequently.
We also introduce some notations which will be used in the sequel.

The first two lemmas are elementary propositions in linear algebra.
For an $n\times n$ matrix $A$ we set $A_{i}$ to be the left upper $i\times i$ 
  submatrix of $A$, where $i=1,\dots,n$.

\begin{lemma}
The $n\times n$ self-adjoint matrix $A$ is positive definite if and only if the inequality
  $\det(A_{i})>0$ holds for every $i=1,\dots,n$.
\end{lemma}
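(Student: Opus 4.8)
The statement is the classical Sylvester criterion, and the plan is to treat the two implications separately, with essentially all of the work concentrated in the sufficiency direction.

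For necessity, suppose $A$ is positive definite. For each $i$, any nonzero $v\in\C^{i}$ extends to a nonzero vector $\tilde v\in\C^{n}$ by padding with zeros, and $\scprod{A_{i}v,v}=\scprod{A\tilde v,\tilde v}>0$, so every leading submatrix $A_{i}$ is itself positive definite. A positive definite self-adjoint matrix has only positive eigenvalues, hence positive determinant, which gives $\det(A_{i})>0$ for all $i=1,\dots,n$.

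For sufficiency I would argue by induction on $n$. The case $n=1$ is immediate, since $\det(A_{1})$ is the single real diagonal entry, which is positive. For the inductive step, assume the claim for size $n-1$. The leading minors of $A_{n-1}$ are exactly $\det(A_{1}),\dots,\det(A_{n-1})$, all positive by hypothesis, so the induction hypothesis yields $A_{n-1}>0$; in particular $A_{n-1}$ is invertible. Writing
\begin{equation*}
A=\begin{pmatrix} A_{n-1} & b \\ b^{\ast} & d \end{pmatrix},
  \qquad b\in\C^{n-1},\quad d\in\R,
\end{equation*}
I would use the block factorization
\begin{equation*}
A=L\begin{pmatrix} A_{n-1} & 0 \\ 0 & s \end{pmatrix}L^{\ast},
  \qquad L=\begin{pmatrix} I & 0 \\ b^{\ast}A_{n-1}^{-1} & 1 \end{pmatrix},
  \qquad s=d-b^{\ast}A_{n-1}^{-1}b,
\end{equation*}
which exhibits $A$ as congruent (via the invertible matrix $L$) to the block-diagonal matrix $\dg(A_{n-1},s)$. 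Taking determinants gives $\det A=\det(A_{n-1})\cdot s$, so the hypotheses $\det(A)=\det(A_{n})>0$ and $\det(A_{n-1})>0$ force $s>0$. Since $\dg(A_{n-1},s)$ is positive definite (both diagonal blocks are) and congruence preserves positive definiteness, $A>0$, completing the induction.

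The main obstacle is the sufficiency direction, and within it the decisive structural input is that congruence by $L$ preserves definiteness: the positivity of the spectrum of $A$ is not directly readable from the minors, so rather than attempting to control the eigenvalues I route the argument through the Schur complement $s$ and the determinant identity. This maneuver requires $A_{n-1}$ to be invertible, which is precisely what the induction hypothesis supplies; consequently the order of the steps — first upgrading $\det(A_{n-1})>0$ to $A_{n-1}>0$, then extracting $s>0$ from the determinant factorization — is essential and cannot be rearranged.
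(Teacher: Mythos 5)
Your proof is correct: the necessity direction (padding vectors with zeros) and the sufficiency direction (induction via the Schur complement $s=d-b^{\ast}A_{n-1}^{-1}b$, the congruence $A=L\,\dg(A_{n-1},s)\,L^{\ast}$, and the identity $\det A=\det(A_{n-1})\cdot s$) together constitute the standard proof of Sylvester's criterion, and every step checks out. The paper itself states this lemma without any proof, treating it as an elementary fact of linear algebra, so there is no argument to compare against; it is worth noting, though, that your determinant identity $\det A=\det(A_{n-1})\bigl(a_{nn}-b^{\ast}A_{n-1}^{-1}b\bigr)$ is precisely the content of the paper's Lemma~2 (with $T=\det(A_{n-1})\,A_{n-1}^{-1}$), so your route is fully consonant with the toolkit the paper builds around this statement.
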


\begin{lemma}
Assume that $A$ is an $n\times n$ self-adjoint, positive definite matrix with entries 
  $(a_{ij})_{i,j=1,\dots, n}$ and the vector $x$ consists of the first $(n-1)$ elements 
  of the last column, that is $x=(a_{1,n},\dots,a_{n-1,n})$.
Then we have
\begin{equation*}
\det(A)=a_{nn}\det(A_{n-1})-\scprod{x,Tx},
\end{equation*}
  where $T=\det(A_{n-1}) (A_{n-1})^{-1}$.
\end{lemma}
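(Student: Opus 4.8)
The plan is to reduce the identity to the classical Schur-complement formula for the determinant by partitioning $A$ according to the split of its index set into the first $n-1$ coordinates and the last one. Writing $x$ for the column vector $(a_{1,n},\dots,a_{n-1,n})^\top$, self-adjointness of $A$ gives the block form
\[
A=\begin{pmatrix} A_{n-1} & x \\ x^\ast & a_{nn}\end{pmatrix}.
\]
Before anything else I would note that, by Lemma 1, positive definiteness of $A$ forces $\det(A_{n-1})>0$, so $A_{n-1}$ is invertible and the matrix $T=\det(A_{n-1})(A_{n-1})^{-1}$ is well defined (and self-adjoint).

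The key step is the block $LU$ factorization
\[
\begin{pmatrix} A_{n-1} & x \\ x^\ast & a_{nn}\end{pmatrix}
=\begin{pmatrix} I & 0 \\ x^\ast (A_{n-1})^{-1} & 1\end{pmatrix}
\begin{pmatrix} A_{n-1} & x \\ 0 & a_{nn}-x^\ast (A_{n-1})^{-1}x\end{pmatrix},
\]
which one verifies by a single block multiplication. The first factor is lower block-triangular with unit diagonal, hence has determinant $1$; the second is upper block-triangular, so its determinant is $\det(A_{n-1})\,(a_{nn}-x^\ast(A_{n-1})^{-1}x)$. Multiplicativity of the determinant then yields
\[
\det(A)=a_{nn}\det(A_{n-1})-\det(A_{n-1})\,x^\ast(A_{n-1})^{-1}x.
\]

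It remains to recognise the subtracted term as $\scprod{x,Tx}$. Since $\scprod{x,Tx}=x^\ast T x$ and $T=\det(A_{n-1})(A_{n-1})^{-1}$, we have $\scprod{x,Tx}=\det(A_{n-1})\,x^\ast(A_{n-1})^{-1}x$, and the claim follows. The argument is essentially bookkeeping: there is no genuine obstacle, and the only points deserving care are invoking Lemma 1 to guarantee invertibility of $A_{n-1}$ and matching the scalar-product convention so that $\scprod{x,Tx}=x^\ast T x$. An alternative would be Laplace expansion along the last row and column combined with Cramer's rule for the entries of $(A_{n-1})^{-1}$, but the block factorization keeps the computation shortest.
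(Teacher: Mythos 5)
Your proof is correct, and it takes a genuinely different route from the paper's. The paper disposes of this lemma in one line: expand $\det(A)$ by minors along the last row. In that approach the matrix $T=\det(A_{n-1})(A_{n-1})^{-1}$ is recognised as the adjugate $\mathrm{adj}(A_{n-1})$, whose entries are cofactors of $A_{n-1}$, so the Laplace expansion (along the last row, then each resulting minor along the last column) assembles directly into $a_{nn}\det(A_{n-1})-\scprod{x,Tx}$ without ever inverting anything --- this is exactly the ``alternative'' you mention in your closing sentence, minus the detour through Cramer's rule, and it has the small bonus that the identity holds for \emph{any} self-adjoint $A$, even when $A_{n-1}$ is singular, provided one reads $T$ as the adjugate. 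Your block $LU$/Schur-complement factorization instead requires $A_{n-1}$ to be invertible, and you correctly close that gap by invoking Lemma 1, which under the positive-definiteness hypothesis gives $\det(A_{n-1})>0$; within the lemma as stated this costs nothing. What your route buys is conceptual clarity: the factorization makes the identity a one-step consequence of multiplicativity of the determinant, it isolates precisely where positivity is used, and it is the form that generalises to block pivots of arbitrary size, whereas the cofactor expansion is the more elementary and marginally more general computation. Your remark that $\scprod{x,Tx}=x^{\ast}Tx$ is also safe under either convention for the inner product, since $T$ is self-adjoint and the quadratic form is real.
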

\begin{proof}
Elementary matrix computation, one should expand $\det(A)$ by minors, 
  with respect to the last row.
\end{proof}

When we integrate on a subset of the Euclidean space we always integrate 
  with respect to the usual Lebesgue measure.
The Lebesgue measure on $\R^{n}$ will be denoted by $\la_{n}$.

\begin{lemma}
\label{le:integrateonellipsoid}
If $T$ is an $n\times n$ self-adjoint, positive definite matrix 
  and $k,\rho\in\R^{+}$, then
\begin{equation*}
\int\limits_{\kz{x\in\C^{n}\ \vert\ \scprod{x,Tx}<\rho}}
  (\rho-\scprod{x,Tx})^{k}\dint\la_{2n}(x)
  =\frac{\pi^{n} \rho^{n+k}k!}{(n+k)!\det T}.
\end{equation*}
\end{lemma}
\begin{proof}
The set $\kz{x\in\C^{n}\ \vert\ \scprod{x,Tx}<\rho}$ is an $n$ dimensional ellipsoid, 
  so to compute the integral first we transform our canonical basis to a new one, 
  which is parallel to the axes of the ellipsoid.
Since this is an orthogonal transformation, its Jacobian is $1$.
When we transform this ellipsoid to a unit sphere, the Jacobian of this transformation is
\begin{equation*}
\prod_{k=1}^{n}\frac{\rho}{\mu_{k}},
\end{equation*}
  where $(\mu_{k})_{k=1,\dots,n}$ are the eigenvalues of $T$.
Then we compute the integral in spherical coordinates.
The integral with respect to the angles gives the surface of the $2n$ dimensional sphere
  that is $\di \frac{2\pi^{n}}{(n-1)!}r^{2n-1}$.
The integral of the radial part is
\begin{equation*}
\int_{0}^{1}\frac{2\pi^{n}}{(n-1)!}r^{2n-1} \frac{\rho^{n}}{\det T} 
  (\rho-\rho r^{2})^{k} \dint r
  =\frac{2\pi^{n}\rho^{n+k}}{(n-1)!\det T}\int_{0}^{1}r^{2n-1}(1-r^{2})^{k}, 
  \dint r
\end{equation*}
  which gives back the stated formula. 
\end{proof}

\begin{lemma}
\label{lem:rv}
Assume that $X$ is a spherically symmetric and continuous random variable which 
  takes values in the unit ball $\{x\in\R^3 : \norm{x}\le 1\}$.
If $f$ denotes the probability density function of the $z$ component of $X$, then
  for the density of $\norm{X}$ we have
\begin{equation}
\label{eq:rv}
\rho(r) = -2r f'(r) \quad r\in \left]0,1\right[.
\end{equation}
\end{lemma}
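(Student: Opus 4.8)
The plan is to funnel both densities in the statement through the single radial profile of $X$ and then link them by one change of variables. Since $X$ is spherically symmetric and continuous, its density on $\R^3$ depends only on the radius, so I would write it as $p(x)=h(\norm{x})$ for some nonnegative function $h$ supported on $[0,1]$. The density of $\norm{X}$ is then obtained by integrating $p$ over the sphere of radius $r$: because the surface area of the sphere of radius $r$ in $\R^3$ is $4\pi r^2$, one gets
\begin{equation*}
\rho(r)=4\pi r^2\,h(r),\qquad r\in\left]0,1\right[.
\end{equation*}
This is the first of the two identities I want to combine at the end.

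Next I would compute the density $f$ of the $z$-component by integrating out the $x$ and $y$ coordinates. Passing to polar coordinates in the $xy$-plane with $s=\sqrt{x^2+y^2}$ and using that the integrand depends only on $\sqrt{s^2+z^2}$, the angular integral contributes a factor $2\pi$ and one is left with $f(z)=2\pi\int_{0}^{\sqrt{1-z^2}}h\!\gz{\sqrt{s^2+z^2}}\,s\dint s$. The key manoeuvre is the substitution $u=\sqrt{s^2+z^2}$, which gives $u\dint u=s\dint s$ and transforms the limits $s=0,\ s=\sqrt{1-z^2}$ into $u=\abs{z},\ u=1$, so that
\begin{equation*}
f(z)=2\pi\int_{\abs{z}}^{1}h(u)\,u\dint u.
\end{equation*}
For $z\in\left]0,1\right[$ the lower limit is simply $z$, so the fundamental theorem of calculus yields $f'(z)=-2\pi z\,h(z)$ directly, without any differentiation under the integral sign.

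Finally I would combine the two displays: substituting $h(r)=\rho(r)/(4\pi r^2)$ into $f'(r)=-2\pi r\,h(r)$, or equivalently multiplying $f'(r)=-2\pi r\,h(r)$ by $-2r$, gives $-2rf'(r)=4\pi r^2 h(r)=\rho(r)$, which is exactly \eqref{eq:rv}. I expect the only delicate point to be the bookkeeping in the change of variables, namely correctly producing the $\abs{z}$ in the lower limit and then using $\abs{z}=z$ on $\left]0,1\right[$ so that the derivative picks up the clean factor $-2\pi z$; the continuity hypothesis on $X$ is what guarantees that $h$ is well behaved enough for the fundamental theorem of calculus to apply and for $\rho$ and $f'$ to exist as genuine densities.
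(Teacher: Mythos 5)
Your proof is correct, and it rests on the same skeleton as the paper's: both arguments reduce everything to the single radial profile ($h$ in your notation, $g$ in the paper's), record $\rho(r)=4\pi r^{2}h(r)$, establish the intermediate identity $f(z)=2\pi\int_{z}^{1}h(u)\,u\dint u$ for $z\in\left]0,1\right[$, and differentiate. The only genuine divergence is how that intermediate identity is obtained. The paper computes the tail probability $\P(z\ge z_{0})$ as a triple integral in spherical coordinates over the spherical cap, with the polar angle running up to $\arccos\gz{z_{0}/r}$, and then differentiates in $z_{0}$ through variable limits; you instead marginalize the density directly, integrating out the transverse coordinates in plane polar form and using the substitution $u=\sqrt{s^{2}+z^{2}}$ to land on the one-dimensional integral, after which the fundamental theorem of calculus gives $f'(z)=-2\pi z\,h(z)$ without differentiating any multi-variable integral. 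Your route is slightly more economical, since it avoids the $\arccos$ bookkeeping and the variable-limit differentiation, while the paper's cap computation keeps the probabilistic meaning of $f$ (derivative of a cap probability) in view; both versions carry the same implicit regularity caveat, namely that $h$ be continuous (or the identities hold almost everywhere) for the differentiations to be literal, which you correctly flag at the end.
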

\begin{proof}
Let us denote by $f_{X}$ the probability density function of $X$ in the unit ball.
The distribution of $X$ is rotation invariant thus
  there exists a function $g:\sz{0,1}\to\R^+$ such that, for every $x$ in the unit ball
  $f_{X}(x) = g(\norm{x})$.
For the radial density we have 
\begin{equation*}
\rho(r)=\frac{\dint}{\dint r}\P(\norm{X}<r) 
  =\frac{\dint}{\dint r} 4\pi\int\limits_0^r g(s)s^2\dint s
  =4\pi g(r)r^2 \quad r\in\left]0,1\right[.
\end{equation*}
Now we compute the density function of the $z$ component from the radial distribution.
For every $z_{0}\in\left]0,1\right[$
\begin{align*}
f(z_{0})&=\frac{\dint}{\dint z_{0}}\P (z<z_{0}) 
  =-\frac{\dint}{\dint z_{0}}\P (z\ge z_{0}) \\
&=-\frac{\dint}{\dint z_{0}}
  \int\limits_0^{2\pi}\int\limits_y^1\int\limits_0^{\arccos\gz{z_{0}/r}} g(r)r^2\sin\phi 
  \dint\phi\dint r\dint\theta
  =2\pi\int\limits_{z_{0}}^{1} g(r)r\dint r
\end{align*}
  holds and from this by derivation we get
\begin{equation*}
f'(r)=-2\pi g(r)r=-\frac{\rho(r)}{2 r},
\end{equation*} 
  which completes the proof.
\end{proof}

\section{The volume of qubit channels}

To determine the volume of different qubit quantum channels
  we use the same method which consists of three parts.
First, we use a unitary transformation to represent channels
  in a suitable form for further computations.
Then we split the parameter space into lower dimensional
  parts such that the adequate applications of the previously
  mentioned lemmas lead us to the result.

\subsection{General qubit channels}

The following parametrization of $\Q\subset\R^{12}$ is considered
\begin{equation} 
\label{eq:matQ}
Q=\begin{pmatrix}
a_{1} & b & c & d\\
\bar{b} & a_{2} & e & -c\\
\bar{c} & \bar{e} & f_{1} & g\\
\bar{d} & -\bar{c} & \bar{g} &f_{2} \end{pmatrix},
\end{equation}
  where $a_{1},f_{1}\in\sz{0,1}$, $a_{2}=1-a_{1}$, $f_{2}=1-f_{1}$ and $b,c,d,e,g\in\C$.
Let us define $a=a_{1}$ and $f=f_{1}$.

The underlying classical channel corresponding to these parameter values are given by 
$Q_{\cl}=\begin{pmatrix}
        a & 1-a \\
        f & 1-f
       \end{pmatrix}$.
Let us choose the unitary matrix
\begin{equation}
\label{eq:unit}
U=\begin{pmatrix}
1&0&0&0\\
0&0&1&0\\
0&1&0&0\\
0&0&0&1 \end{pmatrix}
\end{equation}
  and define the matrix $A$ as
\begin{equation}
\label{eq:matA}
A=U^{*}QU=\begin{pmatrix}
a&c&b&d\\
\bar{c}&f&\bar{e}&g\\
\bar{b}&e&a_{2}&-c\\
\bar{d}&\bar{g}&-\bar{c}&f_{2}\end{pmatrix},
\end{equation}
  which is positive definite if and only if $Q$ is positive definite.

\begin{theorem}
The volume of the space $\Q$ with respect to the Lebesgue measure is
\begin{equation*}
V(\Q)=\frac{2\pi{^5}}{4725}
\end{equation*}
  and the distribution of volume over classical channels can be written as
\begin{equation}
\label{eq:Vaf}
V(a_{1},f_{1})=\frac{2^{4}\pi^5}{45}
\left\{ \begin{array}{cll}
\di a_{1}^{3}f_{1}^{3}
  (a_{1}^2f_{1}^2-5a_{1}a_{2}f_{1}f_{2}+10a_{2}^2f_{2}^2) 
  & \mbox{if} & a_{1}+f_{1}\leq 1, \\[1em]
\di a_{2}^{3}f_{2}^{3}
  (a_{2}^2f_{2}^2-5a_{1}a_{2}f_{1}f_{2}+10a_{1}^{2}f_{1}^2)
  & \mbox{if} & a_{1}+f_{1}>1.
\end{array}
\right.
\end{equation}
\end{theorem}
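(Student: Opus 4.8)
The plan is to compute first the fibre volume $V(a_1,f_1)$ of all channels lying over a fixed classical channel, and then integrate this density over $(a_1,f_1)\in\sz{0,1}^2$ to obtain $V(\Q)$. I would work throughout with the conjugated matrix $A$ of \eqref{eq:matA}: conjugation by the unitary $U$ preserves both positivity and the volume, but it reorders the entries so that the nested leading minors $A_1,A_2,A_3,A_4$ introduce the off-diagonal variables in a usable order. By the first lemma (Sylvester's criterion) the region $\kz{A>0}$ is cut out by $\det A_1>0$, $\det A_2>0$, $\det A_3>0$, $\det A_4>0$; since $\det A_1=a$ and $\det A_2=af-\abs{c}^2$, the first two conditions already give $a>0$ and $\abs{c}^2<af$, with $a_1,f_1$ ranging over $\sz{0,1}$.

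Holding $a,f,c$ fixed, I would integrate out the remaining variables from the outside in, using the second lemma to write each $\det A_k$ as a constant minus a Hermitian quadratic form in the top part $x_k$ of its last column. Integrating $(d,g)$ first, $\det A_4=f_2\det A_3-\scprod{x_4,T_4x_4}$ with $x_4=(d,g,-c)$ and $T_4=\det(A_3)A_3^{-1}$, so $\kz{\det A_4>0}$ is an ellipsoid and Lemma \ref{le:integrateonellipsoid} ($n=2$, $k=0$) produces a weight proportional to $\det A_3$. Integrating $(b,e)$ next against that weight, $\det A_3=a_2\det A_2-\scprod{x_3,T_3x_3}$ with $x_3=(b,\bar e)$, and Lemma \ref{le:integrateonellipsoid} applies again with $k=1$. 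The quadratic forms collapse cleanly: $\det T_3=\det A_2$, and a minor identity for the adjugate shows the leading $2\times2$ minor of $\det(A_3)A_3^{-1}$ equals $a_2\det A_3$. Each ellipsoid integral is therefore explicit, and what survives is a polynomial in $a,f,c$ times a single remaining integral over $c$.

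The one structural subtlety — and what I expect to be the main obstacle — is that $c$ occupies the two entries $(1,2)$ and $(3,4)$ of $A$, so the $(d,g)$-ellipsoid is centred at a point depending linearly on $c$ rather than at the origin. Completing the square is a measure-preserving translation of $(d,g)$ that makes Lemma \ref{le:integrateonellipsoid} applicable, but its Schur complement shows the squared radius is $\rho_4=\det(A_3)(a_2f_2-\abs{c}^2)/a_2$, so the factor $a_2f_2-\abs{c}^2$ is injected. Hence the final $c$-integral runs over $\abs{c}^2<\min(af,a_2f_2)$, and with $s=\abs{c}^2$ it reduces to $\int_0^{\min(af,a_2f_2)}(a_2f_2-s)^2(af-s)^2\dint s$. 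Because $af-a_2f_2=a_1+f_1-1$, the sign of $a_1+f_1-1$ decides which of $af,a_2f_2$ is the smaller upper limit; this is exactly the dichotomy $a_1+f_1\lessgtr1$ giving the two branches of \eqref{eq:Vaf}, and the elementary polynomial integral yields $\tfrac1{30}(af)^3\big((af)^2-5afa_2f_2+10(a_2f_2)^2\big)$ and its mirror image.

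For the total volume I would integrate $V(a_1,f_1)$ over the unit square; the involution $(a_1,f_1)\mapsto(a_2,f_2)$ swaps the two branches, so the two triangles contribute equally and the computation reduces to beta integrals over $a_1+f_1\le1$. The remaining point to watch is the normalisation: the relevant volume is the one carried by the Hilbert--Schmidt structure $\scprod{X,Y}=\Tr(XY)$ on the Choi matrices \eqref{eq:matQ}, and since the trace relations $a_2=1-a_1$, $f_2=1-f_1$ double the diagonal contributions while the repeated entry $c$ enters with weight two, the volume element carries a Jacobian $\sqrt{\det g}=2^{7}$ over the bare product of coordinate differentials. Tracking this factor, together with one $\pi$ per complex direction and the factorials from Lemma \ref{le:integrateonellipsoid}, is what pins down the prefactor $\tfrac{2^4\pi^5}{45}$ in \eqref{eq:Vaf} and the value $V(\Q)=\tfrac{2\pi^5}{4725}$.
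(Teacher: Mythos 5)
Your proposal is correct and follows essentially the same route as the paper's proof: unitary conjugation to the matrix $A$ of \eqref{eq:matA}, Sylvester's criterion, peeling off $(d,g)$ and then $(b,e)$ via the minor-expansion lemma and Lemma \ref{le:integrateonellipsoid} (with $k=0$, then $k=1$, including the same shift/Schur-complement handling of the repeated entry $c$ and the radius $\det(A_3)(a_2f_2-\abs{c}^2)/a_2$), followed by the $c$-integral whose upper limit $\min(af,a_2f_2)$ yields exactly the dichotomy $a_1+f_1\lessgtr 1$. The only cosmetic deviations are your substitution $s=\abs{c}^2$ in place of the paper's polar coordinates, the symmetry $(a_1,f_1)\mapsto(a_2,f_2)$ used to organize the final integration, and your derivation of the volume element $2^7\dint\la_{12}$ from the Hilbert--Schmidt metric, which the paper asserts without proof.
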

\begin{proof}
Since there is an unitary transformation \eqref{eq:unit} between 
  the set of matrices of the form of \eqref{eq:matA} and the quantum channels
  their volumes are the same.
We compute the volume of the set of matrices given by parameterization \eqref{eq:matA}.

The volume element corresponding to the parametrization \eqref{eq:matQ} is 
  $2^{7}\dint\la_{12}$.
The matrix $A$ in Equation \eqref{eq:matA} is positive definite if and only if 
  $\det (A_i)>0$ for $i=1,2,3,4$. 

First we assume that the parameters $a,f$ and the submatrix $A_{3}$ are given and 
  consider the requirement $\det A_{4}\geq 0$.
Simple calculation shows that we have 
\begin{equation*}
\det A_{4}=R_{3}-
  \scprod{\begin{pmatrix} d'\\g'\end{pmatrix},
  T_{3} \begin{pmatrix} d'\\g'\end{pmatrix}},
\end{equation*}
  where $\di d'=d+\frac{bc}{a_{2}}$, $\di g'=g+\frac{c\bar{e}}{a_{2}}$, 
  $\di R_{3}=\gz{\det A_{3}}\gz{f_{2}-\frac{\abs{c}^{2}}{a_{2}}}$ and
\begin{equation*}
T_{3}=\begin{pmatrix} 
  a_{2}f_{1}-\abs{e}^{2} & be-a_{2}c\\ \bar{b}\bar{e}-a_{2}\bar{c} & a_{1}a_{2}-\abs{b}^{2}
 \end{pmatrix}.
\end{equation*}
In this case the inequality $\det A_{4}\geq 0$ can be written in the form of
\begin{equation}
\label{ineqR}
\scprod{\begin{pmatrix} d'\\g'\end{pmatrix},
  T_{3} \begin{pmatrix} d'\\g'\end{pmatrix}}\leq R_{3}.
\end{equation}
The matrix $T_{3}$ is positive, because $\det T_{3}=a_{2}\det A_{3}\geq 0$ 
  and $\gz{T_{3}}_{11}$ is the determinant of the middle $2\times 2$ submatix of $A$.
It means, that the Inequality \eqref{ineqR} has solution if, and only if
  $f_{2}a_{2}\geq \abs{c}^{2}$.

The transformation $(d,g)\mapsto(d',g')$ is a shift, therefore it does
  not change the volume element.
We have by Lemma \ref{le:integrateonellipsoid}
\begin{equation*}
V(a,f,b,c,e)=2^{7}\int\limits_{
  \scprod{\begin{pmatrix} d'\\g'\end{pmatrix},
  T_{3} \begin{pmatrix} d'\\g'\end{pmatrix}}\leq R_{3}} 1 \dint (d',g')
=\frac{2^{6}R_{3}^{2}\pi^{2}}{\det T_{3}},
\end{equation*}
  where $\det T_{3}=a_{2}\det A_{3}$, therefore
\begin{equation*}
V(a,f,b,c,e)=
\left\{\begin{array}{cll}
\di\frac{2^{6}\pi^{2}}{a_{2}^{3}}
  \gz{a_{2}f_{2}-\abs{c}^{2}}^{2} \det A_{3} 
  &\mbox{if} & f_{2}a_{2}\geq \abs{c}^{2},\\
\di 0  &\mbox{if} & f_{2}a_{2}<\abs{c}^{2}.
\end{array}
\right.
\end{equation*}

In the second step we assume that the parameters $a,f$ 
  and the submatrix $A_{2}$ are given and consider the requirement $\det A_{3}\geq 0$.
We have 
\begin{equation*}
\det A_{3}=R_{2}-
  \scprod{\begin{pmatrix} b\\ \bar{e}\end{pmatrix},
 T_{2} \begin{pmatrix} b\\ \bar{e}\end{pmatrix}},
\end{equation*}
  where $R_{2}=(1-a)\det A_{2}$ and
\begin{equation*}
T_{2}=\begin{pmatrix}  f & -c\\ -\bar{c} & a \end{pmatrix}.
\end{equation*}
The inequality $\det A_{3}\geq 0$ can be written in the form of
\begin{equation*}
\scprod{\begin{pmatrix} b\\ \bar{e}\end{pmatrix},
  T_{2} \begin{pmatrix} b\\ \bar{e}\end{pmatrix}}\leq R_{2}.
\end{equation*}
We now integrate with respect to $b$ and $e$.
To compute the integral
\begin{align*}
V(a,f,c)&=\int\limits_{\scprod{\begin{pmatrix} b\\ \bar{e}\end{pmatrix},
  T_{2} \begin{pmatrix} b\\ \bar{e}\end{pmatrix}}\leq R_{2}} 
V(a,f,b,c,e)
 \dint (b,e)\\
&=\int\limits_{\btop{\scprod{\begin{pmatrix} b\\ \bar{e}\end{pmatrix},
  T_{2} \begin{pmatrix} b\\ \bar{e}\end{pmatrix}}\leq R_{2}}{f_{2}a_{2}\geq c^{2}}} 
\frac{2^{6}\pi^{2}}{a_{2}^{3}}
  \gz{a_{2}f_{2}-\abs{c}^{2}}^{2}\det A_{3}
 \dint (b,e)
\end{align*}
  we substitute $\det A_{3}=R_{2}-
  \scprod{\begin{pmatrix} b\\ \bar{e}\end{pmatrix},
  T_{2} \begin{pmatrix} b\\ \bar{e}\end{pmatrix}}$
  and by Lemma \ref{le:integrateonellipsoid} we have
\begin{align*}
V(a,f,c)&=\frac{2^{6}\pi^{2}}{a_{2}^{3}}\gz{a_{2}f_{2}-\abs{c}^{2}}^{2}
\hskip-10pt\int\limits_{
 \scprod{\begin{pmatrix} b\\ \bar{e}\end{pmatrix},
  T_{2} \begin{pmatrix} b\\ \bar{e}\end{pmatrix}}\leq R_{2}} 
\hskip-10pt
 \gz{R_{2}-\scprod{\begin{pmatrix} b\\ \bar{e}\end{pmatrix},
   T_{2} \begin{pmatrix} b\\ \bar{e}\end{pmatrix}}}\dint (b,e)\\
&=\frac{2^{6}\pi^{2}}{a_{2}^{3}}\gz{a_{2}f_{2}-\abs{c}^{2}}^{2}\times
 \frac{\pi^{2}R_{2}^{3}}{6\det T_{2}},
\end{align*}
  where $\det T_{2}=\det A_{2}$, therefore
\begin{equation*}
V(a,f,c)=
\left\{\begin{array}{cll}
\di \frac{2^{5}\pi^{4}}{3}\gz{a_{2}f_{2}-\abs{c}^{2}}^{2}\times\gz{\det A_{2}}^{2}
  &\mbox{if} & f_{2}a_{2}\geq \abs{c}^{2},\\
\di 0  &\mbox{if} & f_{2}a_{2}<\abs{c}^{2}.
\end{array}
\right.
\end{equation*}

In the final step we assume that the parameters $a,f$ are given and 
  consider the requirement $\det A_{2}\geq 0$.
It means that $\abs{c}^{2}\leq af$, therefore if 
\begin{equation*}
\abs{c}^{2}\leq\min\kz{af,(1-a)(1-f)}
\end{equation*}
  then
\begin{equation*}
V(a,f,c)=\frac{2^{5}\pi^{4}}{3}\gz{a_{2}f_{2}-\abs{c}^{2}}^{2}\times\gz{af-\abs{c}^{2}}^{2}.
\end{equation*}
If $a+f\leq 1$, then $af\leq(1-a)(1-f)$.
In this case using polar coordinates for $c$ we have
\begin{align}
V(a,f)&=2\pi\int\limits_{0}^{\sqrt{af}}\frac{2^{5}\pi^{4}}{3}\gz{a_{2}f_{2}-r^{2}}^{2}
  \gz{af-r^{2}}^{2}\times r\dint r \nonumber \\
&=\frac{2^{4}\pi^5}{45}a_{1}^{3}f_{1}^{3}
  (a_{1}^{2}f_{1}^{2}-5a_{1}a_{2}f_{1}f_{2}+10a_{2}^2f_{2}^2).
\label{eq:Vaf1}
\end{align}
If $a+f\geq 1$, then $af\geq(1-a)(1-f)$.
In this case using polar coordinates for $c$ we have
\begin{align}
V(a,f)&=2\pi\int\limits_{0}^{\sqrt{(1-a)(1-f)}}\frac{2^{5}\pi^{4}}{3}\gz{a_{2}f_{2}-r^{2}}^{2}
  \gz{af-r^{2}}^{2}\times r\dint r \nonumber\\
&=\frac{2^{4}\pi^5}{45}a_{2}^{3}f_{2}^{3}
  (a_{2}^2f_{2}^2-5a_{1}a_{2}f_{1}f_{2}+10a_{1}^{2}f_{1}^{2}).
\label{eq:Vaf2}
\end{align}
Equations \eqref{eq:Vaf1} and \eqref{eq:Vaf2} give back Equation \eqref{eq:Vaf}.
The volume of the space of quantum channels is
\begin{equation}
V=\int_{0}^{1}\int_{0}^{1}V(a,f)\dint a\dint f=\frac{2\pi^{5}}{4725}.
\end{equation}
\end{proof}

\subsection{Unital qubit channels}

The following parametrization of $\Q^1\subset\R^{9}$ is considered
\begin{equation}
\label{eq:matQunital}
Q=\begin{pmatrix}
a_{1} & b & c & d\\
\bar{b} & a_{2} & e & -c\\
\bar{c} & \bar{e} & a_{2} & -b\\
\bar{d} & -\bar{c} & -\bar{b} &a_{1} \end{pmatrix},
\end{equation}
  where $a_{1}\in\sz{0,1}$, $a_{2}=1-a_{1}$ and $b,c,d,e\in\C$.
Let us define $a=a_{1}$.
The underlying classical channel corresponding to these parameter values are given by 
$Q^{1}_{\cl}=\begin{pmatrix}
        a & 1-a \\
        1-a & a
       \end{pmatrix}$.
Let us choose the unitary matrix
\begin{equation}
\label{eq:unitunital}
U=\begin{pmatrix}
0&0&1&0\\
1&0&0&0\\
0&1&0&0\\
0&0&0&1 \end{pmatrix}
\end{equation}
  and define the matrix $A$ as
\begin{equation}
\label{eq:matAunital}
A=U^{*}QU=\begin{pmatrix}
a_{2}&e&\bar{b}&-c\\
\bar{e}&a_{2}&\bar{c}&-b\\
b&c&a_{1}&d\\
-\bar{c}&-\bar{b}&\bar{d}&a_{1}\end{pmatrix}, 
\end{equation}
  which is positive definite if and only if $Q$ is positive definite.

\begin{theorem}
 The volume of the space $\Q^{1}$ with respect to the Lebesgue measure is
\begin{equation*}
V(\Q)=\frac{8\pi^{4}}{945}
\end{equation*}
  and the distribution of volume over classical channels can be written as
\begin{equation}
\label{eq:Va}
V(a)=\frac{2^{4}\pi^{4}}{3}a^{4}(1-a)^{4}.
\end{equation}
\end{theorem}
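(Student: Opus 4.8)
The plan is to reproduce the three-stage scheme of the non-unital case, now for the parametrization \eqref{eq:matAunital}. Since the unitary \eqref{eq:unitunital} is volume preserving, it suffices to integrate the indicator of $\kz{A>0}$ against the volume element over the real parameter $a$ and the complex parameters $b,c,d,e$, that is over $\R^9$. That volume element is again $2^7\dint\la_9$: in the Hilbert--Schmidt geometry $\scprod{X,Y}=\Tr(XY)$ the coordinate vector fields are mutually orthogonal and the Gram determinant works out to $2^{14}$, whose square root is $2^7$, exactly as in the general case. By Lemma 1, $A>0$ is equivalent to $\det A_i>0$ for $i=1,2,3,4$, and I would integrate the parameters out in the order $d\to(b,c)\to e\to a$, removing one leading minor at each step by Lemma \ref{le:integrateonellipsoid}.

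The essential difference from the general case appears already at the first step, and this is where I expect the main difficulty to lie. In \eqref{eq:matA} the last column carried two fresh complex parameters, whereas the last column of \eqref{eq:matAunital} is $(-c,-b,d,a_1)$, so only $d$ is new while $-c$ and $-b$ are already frozen inside $A_3$. Hence $\det A_4\ge 0$ must be integrated over the single complex variable $d$, and one cannot directly use the picture in which the whole last column sweeps out an ellipsoid. To see that the step still closes cleanly I would pass to the block form $A=\begin{pmatrix}P&R\\R^*&S\end{pmatrix}$ with $P=\begin{pmatrix}a_2&e\\\bar e&a_2\end{pmatrix}$ and $S=\begin{pmatrix}a_1&d\\\bar d&a_1\end{pmatrix}$, so that $A>0$ is equivalent to $P>0$ together with $S>M$ for the Schur complement $M=R^*P^{-1}R$. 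The decisive point is that $M$ has equal diagonal entries $M_{11}=M_{22}=:m$, forced by the unital constraint that makes $b$ and $c$ reappear; consequently the admissible $d$ fill a disk of area $\pi(a_1-m)^2$, and since $a_1-m=\det A_3/\det A_2$ this first integration contributes exactly $2^7\pi(\det A_3)^2/(\det A_2)^2$. Establishing this symmetry is the crux of the whole computation.

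From here the remaining stages run parallel to the general case. Integrating over $(b,c)$ I would use Lemma 2 to write $\det A_3=a_1\det A_2-\scprod{y,T_2y}$ with $y=(\bar b,\bar c)$ and $T_2=\det(A_2)(A_2)^{-1}$, and then apply Lemma \ref{le:integrateonellipsoid} with $n=2$ and $k=2$; with $\det T_2=\det A_2$ and $R_2=a_1\det A_2$ this collapses to a constant multiple of $a^4\det A_2$. Integrating over $e$ against $\det A_2=a_2^2-\abs{e}^2>0$ in polar coordinates then yields $V(a)=\frac{2^4\pi^4}{3}a^4(1-a)^4$, which is \eqref{eq:Va}. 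Finally, integrating over $a\in\sz{0,1}$ with $\int_0^1 a^4(1-a)^4\dint a=B(5,5)=\frac{4!\,4!}{9!}=\frac{1}{630}$ gives $V(\Q^1)=\frac{2^4\pi^4}{3}\cdot\frac{1}{630}=\frac{8\pi^4}{945}$, as claimed.
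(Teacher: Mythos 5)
Your proposal is correct and follows essentially the same route as the paper: the same unitary conjugation, the same volume element $2^{7}\dint\la_{9}$ (the paper's ``$\la_{12}$'' there is a typo), the same integration order $d\to(b,c)\to e\to a$ via Lemma \ref{le:integrateonellipsoid}, and identical intermediate values $V(a,b,c,e)=2^{7}\pi\gz{\det A_{3}/\det A_{2}}^{2}$ and $V(a,e)=\frac{2^{5}\pi^{3}a^{4}}{3}\det A_{2}$. Your only deviation is cosmetic: you justify the first step through the block Schur complement $S-R^{*}P^{-1}R>0$ and the symmetry $M_{11}=M_{22}$ (which indeed holds and yields the disk of radius $a_{1}-m=\det A_{3}/\det A_{2}$), whereas the paper obtains the same disk from the explicit identity $\det A_{4}=\gz{\det A_{3}}^{2}/\det A_{2}-\abs{d'}^{2}\det A_{2}$ with the shift $d'$.
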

\begin{proof}
Since there is an unitary transformation \eqref{eq:unitunital} between 
  the set of matrices of the form of \eqref{eq:matAunital} and the quantum channels,
  their volumes are the same.
We compute the volume of the set of matrices given by parameterization \eqref{eq:matAunital}.

The volume element corresponding to the parametrization \eqref{eq:matQunital} is 
  $2^{7}\dint\la_{12}$.
The matrix $A$ in Equation \eqref{eq:matAunital} if positive definite if and only if 
  $\det (A_i) > 0$ for $i=1,2,3,4$. 

First we assume that the parameter $a$ and the submatrix $A_{3}$ are given and 
  consider the requirement $\det A_{4}\geq 0$.
Simple calculation shows that we have 
\begin{equation*}
\det A_{4}=\frac{\gz{\det A_{3}}^{2}}{\det A_{2}}-\abs{d'}^{2}\det A_{2},
\end{equation*}
  where 
\begin{equation*}
d'=d+\frac{2bc(1-a)-\bar{e}c^2-b^2e}{\det A_{2}}.
\end{equation*}
In this case the inequality $\det A_{4}\geq 0$ can be written in the form of
\begin{equation*}
\abs{d'}\leq\frac{\det A_{3}}{\det A_{2}}.
\end{equation*}
The transformation $d\mapsto d'$ is a shift, therefore it does not change the 
  volume element.
So we have
\begin{equation*}
V(a,b,c,e)=\int\limits_{\abs{d'}\leq\frac{\det A_{3}}{\det A_{2}}} 2^{7} 
  \dint (d')
 =2^{7}\pi\gz{\frac{\det A_{3}}{\det A_{2}}}^{2}.
\end{equation*}

In the next step we assume that the parameter $a$ and the submatrix $A_{2}$ are given
  and consider the requirement $\det A_{3}\geq 0$.
We have 
\begin{equation*}
\det A_{3}=R_{2}-
  \scprod{\begin{pmatrix} \bar{b}\\ \bar{c}\end{pmatrix},
 T_{2} \begin{pmatrix} \bar{b}\\ \bar{c}\end{pmatrix}},
\end{equation*}
where
\begin{equation*}
R_{2}=A_{33}\det A_{2}
\quad\mbox{and}\quad
T_{2}=\begin{pmatrix}  a_{2} & -e\\ -\bar{e} & a_{2} \end{pmatrix}.
\end{equation*}
The inequality $\det A_{3}\geq 0$ can be written in the form of
\begin{equation*}
\scprod{\begin{pmatrix} \bar{b}\\ \bar{c}\end{pmatrix},
  T_{2} \begin{pmatrix} \bar{b}\\ \bar{c}\end{pmatrix}}\leq R_{2}.
\end{equation*}
We now integrate with respect to $b$ and $c$.
To compute the integral
\begin{align*}
V(a,e)&=\int\limits_{\scprod{\begin{pmatrix} \bar{b}\\ \bar{c}\end{pmatrix},
  T_{2} \begin{pmatrix} \bar{b}\\ \bar{c}\end{pmatrix}}\leq R_{2}} 
V(a,b,c,e)
 \dint (b,c)\\
&=\int\limits_{\scprod{\begin{pmatrix} \bar{b}\\ \bar{c}\end{pmatrix},
  T_{2} \begin{pmatrix} \bar{b}\\ \bar{c}\end{pmatrix}}\leq R_{2}}
 2^{7}\pi\gz{\frac{\det A_{3}}{\det A_{2}}}^{2}
 \dint (b,c)
\end{align*}
we substitute $\det A_{3}=R_{2}-
  \scprod{\begin{pmatrix} \bar{b}\\ \bar{c}\end{pmatrix},
 T_{2} \begin{pmatrix} \bar{b}\\ \bar{c}\end{pmatrix}}$ 
  and by Lemma \ref{le:integrateonellipsoid} we have
\begin{align*}
V(a,e)&=\frac{2^{7}\pi}{\gz{\det A_{2}}^{2}}
\int\limits_{
 \scprod{\begin{pmatrix} \bar{b}\\ \bar{c}\end{pmatrix},
  T_{2} \begin{pmatrix} \bar{b}\\ \bar{c}\end{pmatrix}}\leq R_{2}} 
 \gz{R_{2}-\scprod{\begin{pmatrix} \bar{b}\\ \bar{c}\end{pmatrix},
   T_{2} \begin{pmatrix} \bar{b}\\ \bar{c}\end{pmatrix}}}^{2}\dint (b,c)\\
&=\frac{2^{5}\pi^{3}a^{4}}{3}\times\det A_{2}.
\end{align*}

Finally we assume that the parameter $a$ is given and 
  consider the requirement $\det A_{2}\geq 0$.
The condition $\det A_{2}\geq 0$ means that $\abs{e}\leq 1-a$, therefore  
  using polar coordinates for $e$ we have
\begin{equation*}
V(a)=2\pi\int_{0}^{1-a}
  \frac{2^{5}\pi^{3}a^{4}}{3}\times((1-a)^{2}-r^{2})\times r\dint r
  =\frac{2^{4}\pi^{4}}{3}a^{4}(1-a)^{4},
\end{equation*}
  which gives back Equation \eqref{eq:Va}.
The volume of the space of unital quantum channels is 
\begin{equation*}
V=\int_{0}^{1}V(a)\dint a=\frac{8\pi^{4}}{945}.
\end{equation*}
\end{proof}

One might think about the generalization of the presented results,
  although in a more general setting several complications occur.
For example, in the case of unital qubit channels one should integrate
  on the Birkhoff polytope, which would cause difficulties since even
  the volume of the polytope is still unknown \cite{IgorPak}.

\section{State transformations under random channels}

In this point, we study how qubits transform under uniformly distributed 
  random quantum channels with respect to the Lebesgue measure.
For simplification in this Section uniformly means that uniformly 
  with respect to the Lebesgue measure.

For further calculations, we need the Pauli basis representation of qubit channels.
The Pauli matrices are the following.
\begin{equation*}
\sigma_{1}=\begin{pmatrix} 0&1\\ 1& 0\end{pmatrix} \qquad
\sigma_{2}=\begin{pmatrix} 0&\ci \\ -\ci& 0\end{pmatrix} \qquad
\sigma_{3}=\begin{pmatrix} 1&0\\ 0&-1\end{pmatrix}
\end{equation*}
We use the Stokes representation of qubits which gives a bijective correspondence
  between qubits and the unit ball in $\R^{3}$ via the map
\begin{equation*}
\kz{x\in\R^{3}\vert\ \norm{x}_{2}\leq 1}\to \M{2}\qquad 
  x\mapsto \frac{1}{2}\gz{I+x\cdot\sigma},
\end{equation*}
  where $\di x\cdot\sigma=\sum_{j=1}^3 x_i \sigma_i$ and
  $\di I=\begin{pmatrix} 1&0\\ 0&1\end{pmatrix}$.
The vector $x$, which describes the state called Bloch vector and the unit ball in
  this setting is called Bloch sphere.

Any trace-preserving linear map $Q:\C^{2\times 2}\to\C^{2\times 2}$ can be written
  in this basis as
\begin{equation*}
 Q\gz{\frac{1}{2}(I+x\cdot\sigma)} = \frac{1}{2}\gz{I+(v+Tx)\cdot{\sigma}},
\end{equation*}
  where $v\in\R^3$ and $T$ is a $3\times 3$ real matrix.
Necessary and sufficient condition for complete positivity of such maps 
  are presented in \cite{RuskaiSzarekWerner}.
If the Choi matrix of qubit channel is given by Equation \eqref{eq:matQ},
  then the Pauli basis representation has the following form.
\begin{equation}
\label{eq:matvT}
v=\begin{pmatrix} \Rea (b+g) \\ -\Ima (b+g) \\ a+f-1 \end{pmatrix}  \quad
T=\begin{pmatrix}
  \Rea(d+e) & \Ima (d+e) & \Rea (b-g)  \\
  \Ima(e-d) & \Rea(d-e)  & \Ima (g-b)  \\
   2\Rea (c) & 2\Ima (c)  & a-f
 \end{pmatrix}
\end{equation}

The next lemma expresses the simple fact that uniformly distributed qubit-qubit
  channels have no preferred direction according to the Stokes parameterization of
  the state space.

\begin{lemma}\label{lem:J}
An orthogonal orientation preserving transformation $O$ in $\R^3$ induces maps 
  $\alpha_O,\beta_O:\Q\to\Q$ via Stokes parametrization $\alpha_O (Q) = O\circ Q$ 
  and $\beta_O (Q) = Q\circ O$. 
The Jacobian of these transformations are $1$.
The Jacobian of the restricted transformations 
  $\alpha'_{O}=\alpha_{O}\bigm\vert_{\Q^{1}}$ and 
  $\beta'_{O}=\beta_{O}\bigm\vert_{\Q^{1}}$ are $1$.
\end{lemma}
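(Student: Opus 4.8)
The plan is to run the whole computation in the Pauli (Stokes) representation \eqref{eq:matvT}, where a channel $Q\in\Q$ is encoded by the pair $(v,T)\in\R^{3}\times\R^{3\times 3}$ via $x\mapsto v+Tx$, and only transfer the answer back to the coordinates of \eqref{eq:matQ} at the very end. First I would record that an orthogonal orientation preserving $O$ acts on the Bloch ball as the channel with vector $0$ and matrix $O$, so that the two compositions become elementary: $\alpha_{O}(Q)=O\circ Q$ sends $x\mapsto O(v+Tx)=Ov+OTx$, hence $(v,T)\mapsto(Ov,OT)$, while $\beta_{O}(Q)=Q\circ O$ sends $x\mapsto v+T(Ox)$, hence $(v,T)\mapsto(v,TO)$. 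Call these induced maps $\tilde\alpha_{O}$ and $\tilde\beta_{O}$. The crucial feature of this reduction is that both are \emph{linear} on the twelve real parameters $(v,T)$, so their Jacobian is simply the determinant of the corresponding linear endomorphism.

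Next I would compute these two determinants. Each map is block diagonal for the splitting $\R^{3}\times\R^{3\times 3}$. For $\tilde\alpha_{O}$ the $v$-block is $v\mapsto Ov$ with determinant $\det O=1$, and the $T$-block is left multiplication $T\mapsto OT$; vectorizing $T$ by columns this is $I_{3}\otimes O$, whose determinant is $(\det O)^{3}=1$. For $\tilde\beta_{O}$ the $v$-block is the identity and the $T$-block is right multiplication $T\mapsto TO$, i.e. $O^{\top}\otimes I_{3}$ after vectorization, again of determinant $(\det O)^{3}=1$. Thus in the $(v,T)$ coordinates both Jacobians equal $1$.

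The one genuinely delicate point is transferring this value to the coordinates that actually define the Lebesgue volume of $\Q$, namely the entries in \eqref{eq:matQ}. Here I would exploit that the change of coordinates $\Phi$ prescribed by \eqref{eq:matvT} is itself a \emph{linear} isomorphism of $\R^{12}$. Since $\alpha_{O}$ in the paper's coordinates is the conjugate $\Phi^{-1}\circ\tilde\alpha_{O}\circ\Phi$, and $\Phi$ being linear has constant derivative, the chain rule gives $\det(D\alpha_{O})=\det(\Phi^{-1})\,\det(\tilde\alpha_{O})\,\det(\Phi)=\det(\tilde\alpha_{O})$: the factors $\det\Phi$ and $(\det\Phi)^{-1}$ cancel, so the Jacobian in the \eqref{eq:matQ} coordinates equals the one in the $(v,T)$ coordinates \emph{independently} of whether $\Phi$ preserves volume. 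Hence both $\alpha_{O}$ and $\beta_{O}$ have Jacobian $1$. I expect this cancellation to be the main thing to state cleanly: the temptation is to first compute $\det\Phi$, whereas the right observation is that one never needs it, since similar matrices share a determinant.

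Finally, for the unital restrictions I would note that $\Q^{1}$ corresponds exactly to the affine slice $v=0$, which is invariant under both induced maps ($Ov=0$ when $v=0$, and $v\mapsto v$ trivially). On this slice only the nine entries of $T$ survive, the relevant coordinate map (the specialization of \eqref{eq:matvT} with $f=1-a$ and $g=-b$) is again a linear isomorphism onto $\R^{3\times 3}$, and the two $T$-blocks computed above still have determinant $(\det O)^{3}=1$. The same conjugation argument then yields Jacobian $1$ for $\alpha'_{O}$ and $\beta'_{O}$ as well.
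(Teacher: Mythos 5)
Your proposal is correct, and it takes a genuinely different route from the paper. The paper's proof is computational: it decomposes an orientation-preserving orthogonal map into elementary rotations about the coordinate planes, writes out explicitly how such a rotation transforms the twelve real parameters of \eqref{eq:matQ} (as displayed for $\beta_{O}$ with a rotation in the $yz$-plane), and then verifies by a computer algebra system that the determinant of the resulting $12\times 12$ derivative matrix equals $1$, repeating this for the other generators and for the restrictions to $\Q^{1}$. Your argument replaces all of this with structure: in the Pauli transfer coordinates $(v,T)$ the maps are \emph{linear}, block-split as $v\mapsto Ov$, $T\mapsto OT$ (resp.\ $v\mapsto v$, $T\mapsto TO$), so the Jacobians are $\det O\cdot\det(I_{3}\otimes O)=(\det O)^{4}=1$ and $\det(O^{\top}\otimes I_{3})=(\det O)^{3}=1$ by Kronecker-product determinants, with no case analysis, no reduction to generators, and no machine computation; the conjugation step correctly observes that $\det\Phi$ cancels and never needs to be evaluated, and the $v=0$ slice argument disposes of the unital case in one line. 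One small imprecision: the coordinate change $\Phi$ of \eqref{eq:matvT} is affine rather than linear (the third component of $v$ is $a+f-1$, and in the unital specialization $T_{33}=2a-1$), but since your argument only uses that $D\Phi$ is a constant invertible matrix, the chain-rule cancellation is unaffected; you should also verify (as is immediate from \eqref{eq:matvT}) that $\Phi$ is indeed invertible, which follows by solving the sum/difference pairs such as $a+f=v_{3}+1$, $a-f=T_{33}$. What your approach buys is a proof valid for every orientation-preserving $O$ simultaneously, making transparent that the result is exactly the statement $\det O=1$ propagated through Kronecker powers; what the paper's approach buys is independence from the Pauli representation and a directly checkable certificate in the very coordinates in which the Lebesgue volume of $\Q$ was defined.
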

\begin{proof}
We used a computer algebra program to verify this lemma.
We considered three different kind of rotations according to the plane of rotations
  ($xy$, $xz$ and $yz$ plane).
It is enough to prove that the Jacobian of the generated $\alpha,\beta$ 
  transformation is $1$, since every orthogonal orientation preserving transformation 
  can be written as a suitable product of these elementary rotations.
We present the calculations for $\beta_{0}$, where
\begin{equation*}
O=\begin{pmatrix} 
  1 & 0 & 0\\
  0& \cos\alpha & -\sin\alpha\\
  0&\sin\alpha & \cos\alpha 
\end{pmatrix} \quad \alpha\in\R. 
\end{equation*}
If we consider a quantum channel given by parameters as in Equation \eqref{eq:matQ},
  then the effect of $\beta_{O}$ can be computed 
\begin{align*}
\beta_{O}(a,f,b_{1},b_{2},c_{1},c_{2},&d_{1},d_{2},e_{1},e_{2},g_{1},g_{2})\\
 & =(a',f',b'_{1},b'_{2},c'_{1},c'_{2},d'_{1},d'_{2},e'_{1},e'_{2},g'_{1},g'_{2}),
\end{align*}
  where all the parameters are real numbers and subscript $1$ refers to the real part
  and $2$ to the imaginary part.
We list some of the new parameters.
\begin{align*}
a'&=\frac{a+f}{2}+\frac{(a-f)\cos\alpha}{2}+c_{2}\sin\alpha\\
f'&=\frac{a+f}{2}-\frac{(a-f)\cos\alpha}{2}-c_{2}\sin\alpha\\
b'_{1}&=\frac{b_{1}(1+\cos\alpha)+g_{1}(1-\cos\alpha)}{2}
  +\frac{(e_{2}+d_{2})\sin\alpha}{2}\\
b'_{2}&=\frac{b_{2}(1+\cos\alpha)+g_{2}(1-\cos\alpha)}{2}+
  +\frac{(e_{2}-d_{2})\sin\alpha}{2}\\
c'_{1}&=c_{1}\\
c'_{2}&=c_{2}\cos\alpha-\frac{(a-f)\sin\alpha}{2}
\end{align*}
Next we computed the $12\times 12$ coefficient matrix, which is the derivative of
  the function $\beta_{0}$, and the computed determinant of the coefficient matrix turned
  to be $1$.
The similar computation was done for the other rotations.
The Jacobian of transformations $\alpha_{0}$, $\alpha'_{O}$ and $\beta'_{O}$ 
  was checked similarly.
\end{proof}

The idea of calculations about state transformations under random quantum channel is 
  presented by the following simpler case.

\begin{theorem}
\label{th:uniformrandomtomostmixed}
Applying uniformly random channels to the most mixed state, 
  the radii distribution of the resulted quantum states is the following.
\begin{equation}
\label{eq:uniformrandomtomostmixed}
\kappa(r)=40r^2(1-r)^6 (r^{3}+6r^{2}+12r+2) \quad r\in\sz{0,1}
\end{equation}
\end{theorem}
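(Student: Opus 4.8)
The plan is to reduce everything to the Pauli-basis vector $v$. The most mixed state has Bloch vector $0$, so by the representation \eqref{eq:matvT} a channel $Q$ with data $(v,T)$ sends it to the state with Bloch vector $v$. Hence the radius of the image is exactly $\norm{v}$, and the task becomes: determine the distribution of $\norm{v}$ when $Q$ is drawn uniformly (with respect to the Lebesgue measure) from $\Q$.

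First I would establish that this distribution is spherically symmetric. The map $\alpha_{O}(Q)=O\circ Q$ acts on the Pauli data by $v\mapsto Ov$, and by Lemma \ref{lem:J} it preserves the Lebesgue measure on $\Q$. Consequently the push-forward law of $v$ is invariant under every orientation preserving rotation, i.e.\ spherically symmetric. Since the image of a state is again a state we have $\norm{v}\le 1$, so $v$ takes values in the unit ball and Lemma \ref{lem:rv} is applicable, provided we can supply the density of a single Cartesian component.

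The decisive simplification is that the third coordinate $v_{3}=a+f-1$ depends only on the underlying classical parameters $(a,f)$ and not on $b,c,d,e,g$. Therefore its density $f$ (the $z$-component density of Lemma \ref{lem:rv}) can be read off directly from the volume distribution $V(a,f)$ of \eqref{eq:Vaf}, without performing any further integration over the fibre coordinates. Writing $z_{0}=a+f-1$ and integrating $V$ along the line $a+f=1+z_{0}$ (in the coarea formula the arclength factor and the gradient norm $\sqrt{2}$ cancel), one obtains
\begin{equation*}
f(z_{0})=\frac{1}{V(\Q)}\int_{z_{0}}^{1} V(a,1+z_{0}-a)\dint a,
\qquad V(\Q)=\frac{2\pi^{5}}{4725}.
\end{equation*}
For $z_{0}>0$ one has $a+f>1$, so only the second branch of \eqref{eq:Vaf} contributes, now with the classical parameter substituted as $1-f=a-z_{0}$; the quotient $V(a,f)/V(\Q)$ is free of $\pi$.

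Finally I would feed $f$ into Lemma \ref{lem:rv}, giving $\kappa(r)=-2rf'(r)$, and verify that the result coincides with \eqref{eq:uniformrandomtomostmixed}. The main obstacle is purely computational: after the substitution $1-f=a-z_{0}$ the integrand is a polynomial of degree ten in $a$, so producing the closed form of $f$ and then differentiating demands careful bookkeeping, and one must track the prefactor $\frac{2^{4}\pi^{5}}{45}$ against the normalization $V(\Q)$ so that all powers of $\pi$ cancel and the rational constant $40$ together with the factor $r^{2}(1-r)^{6}$ and the cubic $r^{3}+6r^{2}+12r+2$ emerge.
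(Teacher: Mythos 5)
Your proposal is correct and takes essentially the same route as the paper's own proof: reduce to the $z$-component $z'=a+f-1$, obtain its density by integrating the volume distribution \eqref{eq:Vaf} along the lines $a+f=1+z_{0}$ (only one branch of \eqref{eq:Vaf} contributing on each side), invoke the rotation invariance of Lemma \ref{lem:J} for spherical symmetry, and convert to the radial density via $\kappa(r)=-2rf'(r)$ from Lemma \ref{lem:rv}. The only differences are presentational — you normalize by $V(\Q)$ up front where the paper normalizes at the end, and you make the cancellation of the $\sqrt{2}$ coarea factor explicit.
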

\begin{proof}
Applying a quantum channel of the form of given by Equation \eqref{eq:matQ} 
  to the most mixed state gives $z$ component $z'=a+f-1$.
If $z\geq 0$ then take the (not normalized) distribution from Equation \eqref{eq:Vaf}
\begin{equation*}
\tilde{V}(a,f)=(1-a)^3(1-f)^3\gz{(1-a)^{2}(1-f)^{2}-5a(1-a)f(1-f)+10a^{2}f^{2}}.
\end{equation*}
The density function of the $z$ component comes from the integral
\begin{equation*}
\eta(z)\sim\int\limits_z^1 \tilde{V}(a,z+1-a)\dint a.
\end{equation*}
The $z<0$ case can be handled in a similar way.
After normalization we have the following formula for the density function.
\begin{equation}
\label{eq:zdist}
\eta(z)=\frac{20}{11}\gz{z^4+7\abs{z}^3+17z^2+7\abs{z}+1}(1-\abs{z})^7 \quad z\in\sz{-1,1}
\end{equation}	

The distribution of quantum channels is invariant for orthogonal transformations 
  (Lemma \ref{lem:J}, the Jacobian of $\alpha_O$ is $1$). 
This means, that for every orthogonal basis the distribution of the $z$ component 
  of the image of the maximally mixed state is given by Equation \eqref{eq:zdist}.
Using Lemma \ref{lem:rv} we have
\begin{equation*}
\kappa(r)=-2r\eta'(r),
\end{equation*}
  which gives the desired formula for $\kappa$ immediately.
\end{proof}

It is worth to note that contrary to the classical case in quantum setting
  the entropy of the most mixed state will decrease after a random quantum channel
  is applied, since the Bloch radius of the resulted quantum state is
  $\di\frac{50}{143}$ in average.

Now we study the effect of unital uniformly distributed quantum channels.

\begin{theorem}
Assume that uniformly distributed unital quantum channel is applied to a given state with 
  Bloch radius $r_{0}$.
The radii distribution of the resulted quantum states is the following.
\begin{equation}
\label{eq:kappai}
\kappa_{1}(r,r_{0})=\frac{315}{16}\times\frac{r^{2}(r_{0}^{2}-r^{2})^{3}}{r_{0}^{9}}
  \chi_{\sz{0,r_{0}}}(r)
  \quad r\in\sz{0,1}.
\end{equation}
\end{theorem}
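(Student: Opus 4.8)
The plan is to reduce everything to the one-dimensional $z$-component computation, exactly in the spirit of Theorem \ref{th:uniformrandomtomostmixed}, by exploiting the rotational invariance of the uniform measure on $\Q^1$ furnished by Lemma \ref{lem:J}. First I would pass to the Pauli basis representation \eqref{eq:matvT}. For a unital channel the identity-preserving constraint specializes the general parametrization \eqref{eq:matQ} by forcing $f=1-a$ and $g=-b$, so that $v=0$ and the channel acts on a Bloch vector $x$ simply by $x\mapsto Tx$. Thus a state of Bloch radius $r_{0}$ is carried to $Tx$ with $\norm{x}=r_{0}$, and the object of interest is the law of $\norm{Tx}$.

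Next I would invoke Lemma \ref{lem:J} twice. Since $\beta'_{O}\colon Q\mapsto Q\circ O$ is measure preserving and acts in the Pauli picture as $T\mapsto TO$, the pushforward of the uniform law under this map is again uniform, hence $TO$ and $T$ are equidistributed; therefore $\norm{Tx}$ has the same distribution for every input direction with $\norm{x}=r_{0}$, and I may assume $x=(0,0,r_{0})$. Since $\alpha'_{O}\colon Q\mapsto O\circ Q$ acts as $T\mapsto OT$ and is likewise measure preserving, the output $Tx$ is spherically symmetric. With the input along the third axis the output equals $r_{0}$ times the third column of $T$, which by the above specialization of \eqref{eq:matvT} is $r_{0}\,(2\Rea b,\,-2\Ima b,\,2a-1)$. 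The key simplification is that the $z$-component of the output is $r_{0}(2a-1)$, a function of the single classical parameter $a$ alone; choosing the input along $z$ is precisely what makes the relevant marginal depend on $a$ only, rather than on the full joint law of $(a,b,c,d,e)$.

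I would then set $Y=Tx/r_{0}$. This random vector is spherically symmetric and, because unital qubit channels are contractions on the Bloch ball, takes values in the unit ball, so Lemma \ref{lem:rv} applies. Its $z$-component is $w=2a-1$, whose density is the pushforward of the marginal density of $a$. That marginal is exactly the normalized fibre volume from \eqref{eq:Va}, namely $630\,a^{4}(1-a)^{4}$, so the change of variables $a=(1+w)/2$ gives the $z$-component density $f(w)=\frac{315}{256}(1-w^{2})^{4}$ on $[-1,1]$. Applying the formula $\rho(s)=-2s\,f'(s)$ of Lemma \ref{lem:rv} yields the radial density $\frac{315}{16}\,s^{2}(1-s^{2})^{3}$ of $\norm{Y}$ on $]0,1[$, and undoing the scaling through $s=r/r_{0}$ (with the Jacobian factor $1/r_{0}$) produces precisely \eqref{eq:kappai}, supported on $[0,r_{0}]$.

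The main obstacle is conceptual rather than computational: one must be certain that a single scalar marginal, that of $a$, suffices to recover the full radial law. This is resolved by the two applications of Lemma \ref{lem:J}: spherical symmetry reduces the radial density to the density of one fixed component via Lemma \ref{lem:rv}, while the freedom to orient the input vector lets me select the direction for which that component is a function of $a$ alone, so that the required density is read off directly from \eqref{eq:Va}. The only remaining labor is the elementary Beta-integral normalization and two differentiations, which are routine.
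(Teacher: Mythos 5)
Your proposal is correct and follows essentially the same route as the paper: reduce to the initial state $(0,0,r_{0})$ via the measure invariance of $\beta'_{O}$ from Lemma \ref{lem:J}, read off the output $z$-component $r_{0}(2a-1)$, obtain its density $\frac{315}{256}(1-w^{2})^{4}$ from the normalized marginal $630\,a^{4}(1-a)^{4}$ of \eqref{eq:Va}, and convert to the radial law via Lemma \ref{lem:rv}. Your version is in fact slightly more careful than the paper's, since you explicitly invoke $\alpha'_{O}$-invariance to justify the spherical symmetry hypothesis of Lemma \ref{lem:rv} and handle the radius-$r_{0}$ support by an explicit rescaling, steps the paper leaves implicit.
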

\begin{proof}
Since the distribution of unital uniform quantum channels is invariant for orthogonal 
  transformations (Lemma \ref{lem:J}, the Jacobian of $\beta'_{O}$ is $1$), we can assume
  that the initial state was given by the vector $(0,0,r_{0})$
  ($r_{0}\in\left]0,1\right]$).
Applying a unital quantum channel of the form of \eqref{eq:matQunital} to the
  initial state, we get a state with $z$ component $z'=r_{0}(2a-1)$.
The density function of the parameter $a$ of uniformly distributed unital quantum
  channels is a normalized form of \eqref{eq:Va}
\begin{equation*}
\Tilde{V}(a)=630a^{4}(1-a)^{4}\quad a\in\sz{0,1}.
\end{equation*}
If $z\in\sz{-1,1}$ arbitrary, then
\begin{equation*}
P(z'<z)=P\gz{a<\frac{z+r_{0}}{2r_{0}}}
=\left\{\begin{array}{cll}
\di 0 & \mbox{if} & z\leq -r_{0},\\
\di \int\limits_{0}^{(z+r_{0})/(2r_{0})} \Tilde{V}(a)\dint a& \mbox{if} 
  & -r_{0}<z< r_{0},\\
\di 1 & \mbox{if} & z\geq r_{0}.\\
\end{array}
\right.
\end{equation*}
We have for the density function of the $z$ component 
\begin{equation*}
f_{r_{0}}(z)=\frac{\dint P(z'<z)}{\dint z}
  =\frac{315}{256}\times\frac{(r_{0}^{2}-z^{2})^{4}}{r_{0}^{9}}\chi_{\sz{-r_{0},r_{0}}}(z),
\end{equation*}
  where $\chi$ denotes the characteristic function.
If the distribution of the $z$ component is known then by Lemma \ref{lem:rv}
  we can compute the radial distribution which gives us Equation \eqref{eq:kappai}.
\end{proof}

\begin{figure}[ht]
\centering{
  \includegraphics[width = 0.6\textwidth]{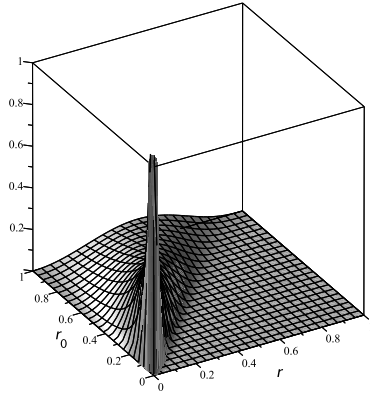}
}
\caption{
The function $\kappa_{1}(r,r_{0})$.
That is the radii distribution ($r$) of the resulted quantum states if
  uniformly distributed unital quantum channels were applied to a given state with 
  Bloch radius $r_{0}$.} 
\label{fig:kappa1}
\end{figure}

The transition probability between different Bloch radii under uniformly distributed
  unital quantum channels $\kappa_{1}(r,r_{0})$ is shown in Figure \ref{fig:kappa1}.
As it is expected, a unital quantum channel decreases the initial Bloch radius $r_{0}$,
  the new Bloch radius is $\di\frac{63}{128}r_{0}$ in average.
Since $\di r'\sim\frac{r}{2}$, repeated application of uniformly distributed unital 
  quantum channels maps every initial state to the most mixed state and
  the convergence is exponential.

\begin{theorem}
Assume that uniformly distributed quantum channel is applied to a given state with 
  Bloch radius $r_{0}$.
The radii distribution of the resulted quantum states is the following.
\begin{equation}
\label{eq:kappa}
\kappa(r,r_{0})=\left\{\begin{array}{l}
\mbox{If $0<r\leq r_{0}$:}\\
\di \frac{40r^{2}}{r_{0}(1+r_{0})^{6}}
   (21r^{4}-6r^{2}r_{0}^{2}-36r^{2}r_{0}+r_{0}^{4}+6r_{0}^{3}+12r_{0}^2+2r_{0}),\\[1em]
\mbox{if $r_{0}<r\leq 1$:}\\
\di \frac{40r(r-1)^{6}}{(1-r_{0}^{2})^{6}}
   (21r_{0}^4-6r^{2}r_{0}^{2}-36rr_{0}^2+r^4+6r^3+12r^2+2r).
\end{array}\right.
\end{equation}
\end{theorem}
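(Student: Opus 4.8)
The plan is to mirror the proof of Theorem \ref{th:uniformrandomtomostmixed}, the $r_{0}=0$ case, adapting it to a nonzero initial radius. First I would invoke Lemma \ref{lem:J}: since the Jacobian of $\alpha_{O}$ is $1$, the uniform channel distribution is invariant under post-composition with any orientation preserving orthogonal map $O$. Consequently, for a fixed input state the law of the output Bloch vector is rotation invariant and depends only on $r_{0}$; this lets me take the initial state to be $(0,0,r_{0})$ and, crucially, supplies the spherical symmetry needed to apply Lemma \ref{lem:rv} at the end. Reading the third coordinate off the Pauli representation \eqref{eq:matvT}, the image of $(0,0,r_{0})$ has $z$ component
\begin{equation*}
z'=(a+f-1)+r_{0}(a-f)=a(1+r_{0})+f(1-r_{0})-1.
\end{equation*}

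Second, I would compute the density $f_{r_{0}}$ of $z'$ by integrating the normalized volume density $\tilde{V}(a,f)$ of \eqref{eq:Vaf} along the level lines $\kz{z'(a,f)=z}$ inside the unit square. Solving $z'=z$ for $f$ gives $f=\frac{z+1-a(1+r_{0})}{1-r_{0}}$, so up to a normalizing constant
\begin{equation*}
f_{r_{0}}(z)\sim\int \tilde{V}\gz{a,\frac{z+1-a(1+r_{0})}{1-r_{0}}}\dint a,
\end{equation*}
the range of $a$ being those values for which the point $(a,f)$ lies in $\sz{0,1}^{2}$. The map $(a,f)\mapsto(1-a,1-f)$ sends $z'\mapsto -z'$ and fixes $\tilde{V}$, so $f_{r_{0}}$ is even and it suffices to treat $z\ge 0$. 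The decisive geometric observation is where the level line meets the anti-diagonal $a+f=1$ separating the two polynomial branches of $\tilde{V}$: on that diagonal $z'=r_{0}(a-f)=r_{0}(2a-1)\in\sz{-r_{0},r_{0}}$, and the two lines cross at $a=\frac{z+r_{0}}{2r_{0}}$, which lies in $\sz{0,1}$ exactly when $\abs{z}\le r_{0}$. Hence for $\abs{z}\le r_{0}$ the level line crosses the anti-diagonal inside the square and the integral must be split into an $a+f<1$ piece and an $a+f>1$ piece, whereas for $\abs{z}>r_{0}$ the segment inside the square lies entirely on one side and only a single branch contributes. This dichotomy is precisely the origin of the two regimes $0<r\le r_{0}$ and $r_{0}<r\le 1$ in \eqref{eq:kappa}.

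Third, after carrying out these lengthy but routine polynomial integrations in each regime and assembling $f_{r_{0}}$ on $\sz{-1,1}$, I would apply Lemma \ref{lem:rv} in the form $\kappa(r,r_{0})=-2r f_{r_{0}}'(r)$ and differentiate to obtain \eqref{eq:kappa}. As a consistency check, setting $r_{0}=0$ collapses the first regime to the empty interval and reduces the second to $40r^{2}(1-r)^{6}(r^{3}+6r^{2}+12r+2)$, recovering Theorem \ref{th:uniformrandomtomostmixed}. I expect the main obstacle to be the level-set integration of the middle step: because $\tilde{V}$ is piecewise and the slope $-\frac{1+r_{0}}{1-r_{0}}$ of the level line differs from that of the anti-diagonal, one must track carefully which edges of the square the segment enters and exits and where it crosses $a+f=1$, so that the correct branch of $\tilde{V}$ is integrated over each subinterval of $a$ and the pieces are matched continuously at $\abs{z}=r_{0}$.
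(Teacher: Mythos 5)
Your proposal is correct and follows essentially the same route as the paper: reduce to the initial state $(0,0,r_{0})$ via Lemma \ref{lem:J}, read off $z'=a+f-1+r_{0}(a-f)$, integrate the normalized density $\tilde{V}(a,f)$ from \eqref{eq:Vaf} (the paper computes the CDF $P(z'<\xi)$ over the planar regions your level lines sweep out, with exactly your case split at the crossing point $(\xi+r_{0})/(2r_{0})$ of the level line with the anti-diagonal $a+f=1$), and finish with Lemma \ref{lem:rv}. One minor attribution slip worth noting: post-composition invariance (the map $\alpha_{O}$) is what gives the spherical symmetry of the output needed for Lemma \ref{lem:rv}, while the reduction to the input $(0,0,r_{0})$ requires pre-composition invariance ($\beta_{O}$), both supplied by Lemma \ref{lem:J} --- the paper itself miscites $\beta'_{O}$ (the unital restriction) at this step, so your use of the lemma is no less careful than the original.
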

\begin{proof}
Since the distribution of unital quantum channels is invariant for orthogonal 
  transformations (Lemma \ref{lem:J}, the Jacobian of $\beta'_{O}$ is $1$) we can assume
  that the initial state was given by the vector $(0,0,r_{0})$
  ($r_{0}\in\left]0,1\right]$).
Applying a quantum channel of the form of \eqref{eq:matQ} to the
  initial state, we get a state with $z$ component $z'=a+f-1+r_{0}(a-f)$.
The density function of parameters $a,f$ of uniformly distributed quantum
  channels is a normalized form of \eqref{eq:Vaf}
\begin{equation*}
\tilde{V}(a,f)=
\left\{ \begin{array}{cll}
\di V_{u}(a,f)  & \mbox{if} & 1\leq a_{1}+f_{1}, \\
\di V_{l}(a,f)  & \mbox{if} & 1> a_{1}+f_{1},
\end{array}
\right.
\end{equation*}
  where
\begin{align*}
V_{u}(a,f)&=840(1-a)^{3}(1-f)^{3}
  ((1-a)^{2}(1-f)^{2}-5a(1-a)f(1-f)+10a^{2}f^{2})\\
V_{l}(a,f)&=840a^{3}f^{3}
  (a^2f^2-5a(1-a)f(1-f)+10(1-a)^{2}(1-f)^{2}).
\end{align*}

First, we compute the probability $P(z'<\xi)$, where $\xi\in\sz{-1,1}$ is an
  arbitrary parameter.
To determine the probability $P(z'<\xi)$ the solution of the inequality
\begin{equation*}
z'=a+f-1+r_{0}(a-f)<\xi
\end{equation*}
  is needed for every parameter $r_{0}\in\left]0,1\right]$ and $\xi\in\sz{-1,1}$, taking
  into account the constraints $0\leq a,f\leq 1$.

To simplify this computation we define temporarily
\begin{align*}
&a_{1}=\frac{1+\xi}{1+r_{0}}, \quad
a_{2}=\frac{\xi+r_{0}}{1+r_{0}}, \quad
f_{1}=\frac{1+\xi}{1-r_{0}}, \quad
f_{2}=\frac{\xi-r_{0}}{1-r_{0}}\\
&q=\frac{1+r_{0}}{1-r_{0}}, \quad\mbox{and}\quad
x_{0}=\frac{\xi+r_{0}}{2r_{0}}.
\end{align*}

In the $\xi<-r_{0}$ case to compute the probability $P(z'<\xi)$ we have to integrate
  the density function $\tilde{V}(a,f)$ over the marked area shown 
  in Figure \ref{fig:Ineqaulityi}.
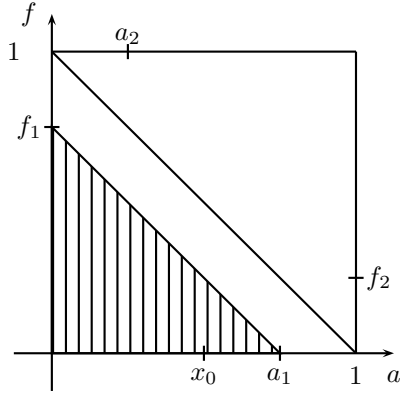
\begin{figure}[ht]
\centering{
\psset{xunit=1cm,yunit=1cm}
\begin{pspicture}(6,6)
\psline{->}(0.5,1)(5.5,1) 
\rput(5.5,0.7){$a$}       
\psline{->}(1,0.5)(1,5.5) 
\rput(0.7,5.5){$f$}       
\psline{-}(5,1)(5,5)      
\psline{-}(1,5)(5,5)      %
\rput(5,0.7){$1$}         %
\rput(0.5,5){$1$}         %
\psline{-}(1,5)(5,1)
\psline{-}(4,0.9)(4,1.1) \rput(4,0.7){$a_{1}$}
\psline{-}(3,0.9)(3,1.1) \rput(3,0.7){$x_{0}$}
\psline{-}(2,4.9)(2,5.1) \rput(2,5.2){$a_{2}$}
\psline{-}(0.9,4)(1.1,4) \rput(0.7,4){$f_{1}$}
\psline{-}(4.9,2)(5.1,2) \rput(5.3,2){$f_{2}$}
\pspolygon[fillstyle=vlines,hatchangle=0](1,1)(4,1)(1,4) 
\end{pspicture}
}
\caption{Solution of the inequality $z'<\xi$ in the $\xi<-r_{0}$ case.} 
\label{fig:Ineqaulityi}
\end{figure}

\begin{align*}
&P(z'<\xi)=\int_{0}^{a_{1}}\int\limits_{0}^{f_{1}-aq} V_{l}(a,f)\dint f\dint a=\\
&\frac{(10\xi^4-88\xi^2r_{0}^2+495r_{0}^4-80\xi^3+704\xi r_{0}^2+228\xi^2-198r_{0}^2-144\xi+33)(1+\xi)^8}
  {66(1-r_{0}^{2})^6}
\end{align*}

In the $-r_{0}\leq\xi\leq r_{0}$ case to compute the probability $P(z'<\xi)$ 
  we have to integrate the density function $\tilde{V}(a,f)$ over the four marked areas
  shown in Figure \ref{fig:Ineqaulityii}.
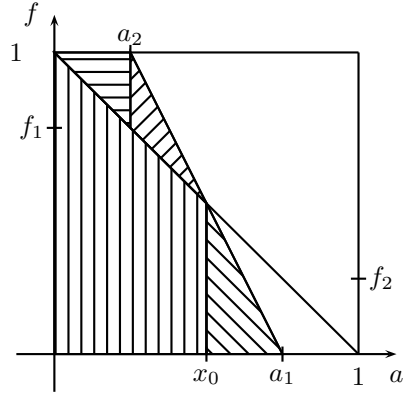
\begin{figure}[ht]
\centering{
\psset{xunit=1cm,yunit=1cm}
\begin{pspicture}(6,6)
\psline{->}(0.5,1)(5.5,1) 
\rput(5.5,0.7){$a$}       
\psline{->}(1,0.5)(1,5.5) 
\rput(0.7,5.5){$f$}       
\psline{-}(5,1)(5,5)      
\psline{-}(1,5)(5,5)      %
\rput(5,0.7){$1$}         %
\rput(0.5,5){$1$}         %
\psline{-}(1,5)(5,1)
\psline{-}(4,0.9)(4,1.1) \rput(4,0.7){$a_{1}$}
\psline{-}(3,0.9)(3,1.1) \rput(3,0.7){$x_{0}$}
\psline{-}(2,4.9)(2,5.1) \rput(2,5.2){$a_{2}$}
\psline{-}(0.9,4)(1.1,4) \rput(0.7,4){$f_{1}$}
\psline{-}(4.9,2)(5.1,2) \rput(5.3,2){$f_{2}$}
\psline{-}(2,5)(4,1)
\psline{-}(3,3)(3,1)
\pspolygon[fillstyle=vlines,hatchangle=0](1,1)(3,1)(3,3)(1,5) 
\pspolygon[fillstyle=vlines,hatchangle=45](3,1)(4,1)(3,3) 
\pspolygon[fillstyle=vlines,hatchangle=90](1,5)(2,5)(2,4)
\pspolygon[fillstyle=vlines,hatchangle=135](2,5)(2,4)(3,3)
\end{pspicture}
}
\caption{Solution of the inequality $z'<\xi$ in the $-r_{0}\leq\xi\leq r_{0}$ case.} 
\label{fig:Ineqaulityii}
\end{figure}
That is
\begin{align*}
P(z'<\xi)=&\int\limits_{0}^{x_{0}}\int\limits_{0}^{1-a}V_{l}(a,f) \dint f \dint a
 +\int\limits_{x_{0}}^{a_{1}}\int\limits_{0}^{f_{1}-aq} V_{l}(a,f) \dint f \dint a\\
 &+\int\limits_{0}^{a_{2}}\int\limits_{1-a}^{1} V_{u}(a,f)\dint f\dint a
 +\int\limits_{a_{2}}^{x_{0}}\int\limits_{1-a}^{f_{1}-aq} V_{u}(a,f\dint f\dint a
\end{align*}
which gives us
\begin{align*}
P&(z'<\xi)=\frac{-1}{66r_{0}(1+r_{0})^{6}} 
(660\xi^7-396\xi^5r_{0}^2+220\xi^3r_{0}^4-100\xi r_{0}^6-33r_{0}^7\\
&-2376\xi^5r_{0}+1320\xi^3r_{0}^3-600\xi r_{0}^5-198r_{0}^6+2640\xi^3r_{0}^2-1440\xi r_{0}^4-495r_{0}^5\\
&+440\xi^3r_{0}-1640\xi r_{0}^3-660r_{0}^4-720\xi r_{0}^2-495r_{0}^3-120\xi r_{0}-198r_{0}^2-33r_{0}).
\end{align*}

Finally in the $\xi>r_{0}$ case to compute the probability $P(z'<\xi)$ we have to integrate
  the density function $\tilde{V}(a,f)$ over the marked area shown in 
  Figure \ref{fig:Ineqaulityiii}.
\begin{figure}[ht]
\centering{
\psset{xunit=1cm,yunit=1cm}
\begin{pspicture}(6,6)
\psline{->}(0.5,1)(5.5,1) 
\rput(5.5,0.7){$a$}       
\psline{->}(1,0.5)(1,5.5) 
\rput(0.7,5.5){$f$}       
\psline{-}(5,1)(5,5)      
\psline{-}(1,5)(5,5)      %
\rput(5,0.7){$1$}         %
\rput(0.5,5){$1$}         %
\psline{-}(4,0.9)(4,1.1) \rput(4,0.7){$a_{1}$}
\psline{-}(3,0.9)(3,1.1) \rput(3,0.7){$x_{0}$}
\psline{-}(2,4.9)(2,5.1) \rput(2,5.2){$a_{2}$}
\psline{-}(0.9,4)(1.1,4) \rput(0.7,4){$f_{1}$}
\psline{-}(4.9,2)(5.1,2) \rput(5.3,2){$f_{2}$}
\pspolygon[fillstyle=vlines,hatchangle=0](1,1)(5,1)(5,2)(2,5)(1,5)
\end{pspicture}
}
\caption{Solution of the inequality $z'<\xi$ in the $\xi>r_{0}$ case.} 
\label{fig:Ineqaulityiii}
\end{figure}
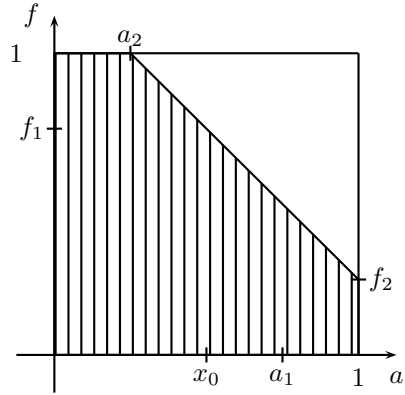

\begin{align*}
&P(z'<\xi)=1-\int\limits_{a_{2}}^{1}\int\limits_{f_{1}-aq}^{1} V_{u}(a,f)
  \dint f\dint a=1-\\
&\frac{(10\xi^4-88\xi^2r_{0}^2+495r_{0}^4+80\xi^3-704\xi r_{0}^2+228\xi^2-198r_{0}^2+144\xi+33)(1-\xi)^8}
  {66(1-r_{0}^{2})^6}
\end{align*}

Now we can compute the density function of the $z$ component as
\begin{equation*}
f_{z}(\xi)=\frac{\dint P(z'<\xi)}{\dint \xi}.
\end{equation*}
Since the density function is even ($f_{z}(\xi)=f_{z}(-\xi)$), we consider only the
  $\xi\geq 0$ case.
If $\xi>r_{0}$, we have
\begin{equation*}
f_{z}(\xi)=\frac{20(1-\xi)^7}{33(1-r_{0})^{6}}
  (3\xi^4-22\xi^2r_{0}^2+99r_{0}^4+21\xi^3-154\xi r_{0}^2+51\xi ^2-22r_{0}^2+21\xi+3)
\end{equation*} 
and if $0\leq x\leq r_{0}$, then
\begin{align*}
f_{z}(\xi)=&\frac{-10}{33r_{0}(1+r_{0})^{6}}
  (231\xi^6-99\xi^4r_{0}^2+33\xi^2r_{0}^4-5r_{0}^6-594\xi^4r_{0}+198\xi^2r_{0}^3\\
&-30r_{0}^5+396\xi^2r_{0}^2-72r_{0}^4+66\xi^2r_{0}-82r_{0}^3-36r_{0}^2-6r_{0}).
\end{align*}
Now we have the distribution of the $z$ component and by Lemma \ref{lem:rv}
  we can get the radial distribution $\kappa$ \eqref{eq:kappa}.
\end{proof}

The transition probability between different Bloch radii under uniformly distributed
  channel $\kappa(r,r_{0})$ is shown in Figure \ref{fig:kappav}.
\begin{figure}[ht]
\centering{
  \includegraphics[width = 0.6\textwidth]{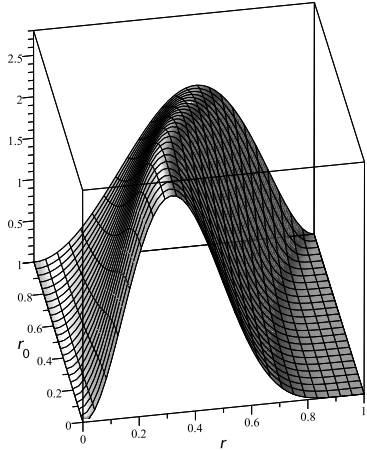}
}
\caption{
The function $\kappa(r,r_{0})$.
That is the radii distribution ($r$) of the resulted quantum states if
  uniformly distributed quantum channels were applied to a given state with 
  Bloch radius $r_{0}$.} 
\label{fig:kappav}
\end{figure}

Note that the function $\kappa(r,0)$ gives back the formula 
  \eqref{eq:uniformrandomtomostmixed} in Theorem \ref{th:uniformrandomtomostmixed}.
In Figure \ref{fig:average} the average Bloch radius is shown after uniformly 
  distributed random quantum channel applied to a state with Bloch radius $r_{0}$.
From this figure it is clear that if the initial Bloch radius is small then 
  a quantum channel likely increases the Bloch radius and if $r_{0}$ is big then
  decreases.
Repeated application of such kind of random quantum channels will send initial
  states to the Bloch radius $r\approx 0.388$.

\begin{figure}[ht]
\centering{
  \includegraphics[width = 0.4\textwidth]{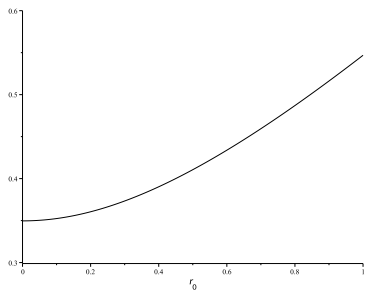}
}
\caption{The average Bloch radius after uniformly distributed random quantum channel
applied to a state with Bloch radius $r_{0}$.} 
\label{fig:average}
\end{figure}

\section{Concluding remarks}

In this work we considered the Choi representation of quantum channels and the
  Lebesgue measure on matrix elements.
We computed the volume of quantum channels and studied the effect of uniformly 
  randomly distributed (with respect the Lebesgue measure) general and unital
  qbit-qbit quantum channels using the Choi's representation.
It was shown that the chosen measure on the space of qbit-qbit channels is
  unitary invariant with respect to the initial and final qbit spaces separately.
We presented the Bloch radii distributions of states after a uniformly random 
  general or unital quantum channel was applied to a given state.
This gives opportunity to study the distribution of different information 
  theoretic quantities (for example different channel capacities, entropy gain,
  entropy of channels etc.) and the effect of repeated applications of uniformly
  random channels.



\def\polhk#1{\setbox0=\hbox{#1}{\ooalign{\hidewidth
  \lower1.5ex\hbox{`}\hidewidth\crcr\unhbox0}}}

\end{document}